\newcommand{\ba}{{\bf a}}
  \newcommand{\cF}{\mathcal{F}} 
  \newcommand{\bp}{{\bf p}}
\newcommand{\bT}{{\bf T}}
 \newcommand{\cH}{\mathcal{H}}
   \newcommand{\cE}{\mathcal{E}}
\newcommand{\fgl}{\mathfrak{gl}}
 \newcommand{\bZ}{\mathbb{Z}}
 \newcommand{\bC}{\mathbb{C}}
 \newcommand{\pd}{\partial}
\newcommand{\Mbar}{\overline{\mathcal M}}
\newcommand{\vac}{|0\rangle} \newcommand{\lvac}{\langle 0|}
 \DeclareMathOperator{\Gr}{Gr}  \DeclareMathOperator{\SL}{SL}
 \DeclareMathOperator{\GL}{GL}
\newcommand{\be}{\begin{equation}}
\newcommand{\ee}{\end{equation}}
\newcommand{\bea}{\begin{eqnarray}}
\newcommand{\eea}{\end{eqnarray}}
\newcommand{\ben}{\begin{eqnarray*}}
\newcommand{\een}{\end{eqnarray*}}
\newcommand{\half}{\frac{1}{2}}
\newtheorem{cor}{Corollary}[section]
\newtheorem{lem}[cor]{Lemma}
 \newtheorem{thm}[cor]{Theorem}
\theoremstyle{remark}
 \newtheorem{rmk}[cor]{Remark}
\definecolor{A}{rgb}{.75,1,.75}
\definecolor{green}{rgb}{0,1,0}
\definecolor{yellow}{rgb}{1,1,0}
\definecolor{orange}{rgb}{1,.7,0}
\definecolor{red}{rgb}{1,0,0}
\definecolor{white}{rgb}{1,1,1}
\begin{document}
\title
{Emergent Geometry of KP Hierarchy. II.}

\author{Jian Zhou}
\address{Department of Mathematical Sciences\\Tsinghua University\\Beijng, 100084, China}
\email{jzhou@math.tsinghua.edu.cn}

\begin{abstract}
We elaborate on a construction of quantum LG superpotential associated
to a tau-function of the KP hierarchy in the case that resulting quantum spectral curve
lies in the quantum two-torus.
This construction is applied to Hurwitz numbers,
one-legged topological vertex and resolved conifold with external D-brane
to give a natural explanation of some earlier work on the relevant quantum curves.
\end{abstract}

\maketitle

\section{Introduction}

In the first part of this paper \cite{Zhou},
we have combined the theory \cite{Tak-Tak} of the twistor data of tau-functions
of the KP hierarchy
with the theory of string equations \cite{Sch}
to obtain a theory of quantum deformation theory of the relevant quantum spectral curves
over the big phase space.
The main idea is that given a tau-function of the KP hierarchy,
one can construct a Lax operator $L$ and Orlov-Schulman operator $M$,
such that $[L, M] = \hbar$.
Furthermore,
there is  an action operator $\hat{S}$ that relates $L$ and $M$
via a quantum Hamilton-Jacobi equation.
A twistor data is given by a pair of differential operators $P$ and $Q$
expressible in terms of $L$ and $M$,
such that $[P, Q] = \hbar$.
When the tau-function is the partition function of Witten's r-spin curves,
this produces a kind of emergent mirror symmetry
that relates some topological matters coupled with 2D topological gravity
to some quantum mechanical system with special actions.

In \cite{Zhou} we focused on the case that at least one of the differential operators, say $P$, given by the twistor
data is of finite order.
We referred to the expression of $P$ in terms of $L$ and $M$ as the quantum Landau-Gingzburg superpotential.
However,
we did not address the issue of how to make the choice for $P$.
Clearly,
if $(P, Q)$ is a twistor data,
so is $(aP +b Q, cP+dQ)$ for any
$$\begin{pmatrix}
a & b \\ c & d
\end{pmatrix} \in \SL_2(\bZ).
$$
Nevetheless in examples there do seem to be some preferred choices.
The resulting quantum spectral curves live the quantum plane.
In the end of Part I we mentioned several directions for generalizations,
one of them being the generalization to the case
when both operators $P$ and $Q$ are of infinite orders.
We will do that in this paper.
We will see that quantum spectral curve more naturally live
in the quantum two-torus in this case.
Furthermore,
we will relax the requirement that $[P, Q] = \hbar$.
So the choice of operators $P$ and $Q$ becomes a more serious issue.
We deal with it as follows.
As in \cite{Sch} we will use the Kac-Schwarz operators $\hat{P}$ and $\hat{Q}$ that characterizes
the element $V$ in Sato Grassmannian corresponding to the tau-function:
\begin{align}
\hat{P}_0 V & \subset V, & \hat{Q}_0 V & \subset V.
\end{align}
We will require that $V$ is spanned by a sequence of series $\{\varphi_n\}_{n \geq 0}$,
such that
\bea
&& \hat{P}_0 \varphi_0 = 0, \\
&& \hat{Q}_0 \varphi_n = \varphi_{n+1}, \;\; n \geq 0.
\eea
The operators $P$ and $Q$ related to $\hat{P}_0$ and $\hat{Q}_0$ respectively
by some Laplace transform and deformation by the KP flow.

As applications,
we will treat the three cases of tau-functions of the KP hierarchy
arising in topological string theory:  (1) Hurwitz numbers and linear Hodge integrals related by ELSV formula,
(2) triple Hodge integrals in Mari\~no-Vafa formula (the one-legged topological vertex),
and  (3) the resolved conifold with one external D-brane.
These three cases are related to each other in the sense that by taking suitable limit,
one gets (2) from (3) and (1) from (2).
The quantum curves (on the small phase spaces) of these three cases were treated in \cite{Zhou-Quant}
from the point of view of Eynard-Orantin recursion \cite{Guk-Sul},
we now treat their deformations over the big phase space
from the point of view of emergent geometry of the KP hierarchy.

We arrange the rest of this paper as follows.
In Section \ref{sec:KS} we explain how to obtain operators $P$ and $Q$ from
Kac-Schwarz operators based on Sato grassmannian.
In the next three Sections we discuss the applications
to tau-functions given by Hurwitz numbers,
one-legged topological vertex,
and resolved conifold with one outer D-brane, respectively,
and compare the results with earlier work on quantum curves for these three cases \cite{Zhou-Quant}.
We conclude the paper by remarking that in most of our examples
the quantum spectral curves live in the quantum complex 2-torus.

\section{Kac-Schwarz Operators and W-Constraints}

\label{sec:KS}

\subsection{Big cell of Sato Grassmannian}

Denote by $\cE$ the algebra of pseudo-differential operators of the form
\be
\sum_{n \geq N} a_n(x, \hbar) (\hbar \pd_x)^n,
\ee
where each $a_n(x, \hbar) = \sum_{m\in \bZ} a_m(\hbar) x^m$ is a Laurent series in $x$.
Then $\cH = \cE/\cE x$ is isomorphic as a vector space to the set
$\bC((\pd_x^{-1}))$,
the space of pseudo-differential operators with constant coefficients.
The algebra $\cE$ naturally acts on $\cH$.
This is because from
\be \label{eqn:Higher Heisenberg}
\pd_x^n x = n \pd_x^{n-1} + x\pd_x^n,
\ee
one can get:
\be
x^m \pd_x^n \cdot \pd_x^l
= (-1)^m m! \binom{n+l}{m}  \pd_x^{n+l-m} \pmod{\cE x}.
\ee
After the Laplace transform,
\be
\cH \cong H = \bC((z^{-1})),
\ee
and the action of $x^m \pd_x^n$ transformed to the action of $(-\pd_z)^m z^n$.

Let $H_+ = \bC[z]$ and $H_- = z^{-1} \bC[[z^{-1}]]$,
One has a decomposition $H = H_+ \oplus H_-$.
Let $\pi: H \to H_+$ be the projection.
The big cell $\Gr^{(0)}$ of Sato grassmannian
consists of linear subspaces $V$ of $H$ such that
$\pi_+|_V: V \to H_+$ are isomorphisms.

\subsection{Boson-fermion correspondence}

The fermionic Fock space $\cF$ is spanned by expressions of the form
\be
z^{k_1} \wedge z^{k_2} \wedge \cdots,
\ee
where $k_1 < k_2 < \cdots$ is a sequence of integers,
such that $k_n -n +1 =0$ for $n \gg 0$.
One can define the fermionic charge of this expression,
and use the charge zero expressions to span a subspace $\cF^{(0)}$ of $\cF$.
The vacuum vector of this space is
\be
\vac = z^0 \wedge z^1 \wedge \cdots z^n \wedge \cdots.
\ee

Denote by $\fgl(\infty)$ the Lie algebra of differential operators in
the variable $z$ with Laurent coefficients.
Every element of this algebra acts
as a linear operator acting on the space $H$:
\be
z^l \pd_z^m (z^k) = m!\binom{k}{m} z^{k+l-m}.
\ee
In particular,
when $l=m$,
\be
z^m \pd_z^m (z^k) = m!\binom{k}{m} z^{k}.
\ee
To every operator $A \in \fgl(\infty)$,
following Kazarian \cite{Kaz}
we define an operator $\hat{A}$ acting on the fermionic Fock space
as follows:
\ben
&& \widehat{z^l \pd_z^m}(z^{k_1}
\wedge z^{k_2} \wedge \cdots ) \\
& = & m!(\binom{k_1}{m} - \delta_{l,m} \binom{1}{m}) z^{k_1+l-m} \wedge z^{k_2} \wedge
z^{k_3} \wedge \cdots \\
& + & z^{k_1} \wedge m!(\binom{k_2}{m} - \delta_{l,m} \binom{2}{m}) z^{k_2+l-m}
) \wedge z^{k_3} \wedge \cdots \\
& + & z^{k_1} \wedge z^{k_2} \wedge m!(\binom{k_3}{m} - \delta_{l,m} \binom{3}{m}) z^{k_3+l-m} \wedge
\cdots + \cdots,
\een
this regularized action defines a central extension $\widehat{\fgl(\infty)}$
of $\fgl(\infty)$.
Write
\be
\alpha_m = \widehat{z^m}.
\ee
The boson-fermion correspondence is a map
$\cF^{(0)} \to \Lambda =\bC[[p_1, p_2, \dots]]$ given by:
\be
|\ba\rangle \in \cF^{(0)} \mapsto
\lvac \exp (\sum_{n=1}^\infty \frac{1}{n} p_n \alpha_n) |\ba\rangle.
\ee
Under this correspondence,
\be
\alpha_m = \widehat{z^m} = \begin{cases}
m \frac{\pd}{\pd p_m}, & \text{if $m > 0$}, \\
0, & \text{if $m =0$}, \\
p_{-m} \cdot, & \text{if $m < 0$}.
\end{cases}
\ee
In the literature,
other examples of operators on $H$ corresponding to
important operators on $\Lambda$ are known \cite{Kac-Sch, Kaz}.
Let
\bea
&& L_m = \half \sum_{j+k=m} :\alpha_j \alpha_k:, \\
&& K_m = \frac{1}{3!} \sum_{j+k+l=m} :\alpha_j\alpha_k\alpha_l:.
\eea
Then one has
\bea
&& L_m = -(z^m (z \pd_z + \frac{m-1}{2}))^{\wedge}, \\
&& K_m
=  \biggl(z^m(\frac{1}{2} (z\pd_z)^2 + \frac{m-1}{2} z\pd_z + \frac{(m-1)(m-2)}{12})
\biggr)^{\wedge}.
\eea
In particular,
the operator $K_0$ is the cut-and-join operator \cite{GJV}:
\be
K_0 = \half \sum_{m,n=1}^\infty ((m+n)p_mp_n \frac{\pd}{\pd p_{m+n}}
+ mn p_{m+n} \frac{\pd^2}{\pd p_m\pd p_n}).
\ee
As noted in \cite{Kaz},
\be
K_0 = \half \biggl( (z\pd_z - \half)^2\biggr)^{\wedge}.
\ee

\subsection{Tau-function associated to elements of Sato Grassmannian}

Let
\be
\widehat{\GL(\infty)} = \{e^{\hat{A}_1} \cdots e^{\hat{A}_n} \;| \; A_1, \dots, A_n \in
\fgl(\infty)\}.
\ee
It acts on $\cF^{(0)}$.
By Sato \cite{Sato},
$\widehat{\GL(\infty)}\vac$ is the space of KP $\tau$-functions.
For an element $g\in \widehat{\GL(\infty)}$,
suppose that
\be
g\vac = \varphi_0 \wedge \varphi_1 \wedge \cdots,
\ee
where $\varphi_n = z^n + $ lower order terms,
then the element $V$ corresponding to $g\vac$ is spanned by
$ \{\varphi_n \}_{n =0, , 2, \dots}$.

\subsection{Kac-Schwarz operators}

Given a tau-function,
one define the wave function
by Sato's formula:
\be \label{eqn:Sato}
w(\bT; \xi) = \exp \biggl(\hbar \sum_{n=1}^\infty  T_n \xi^n \biggr) \cdot
\frac{\tau(T_1-\hbar\xi^{-1}, T_2-\hbar\frac{1}{2} \xi^{-2}, \dots; \hbar)}{\tau(T_1, T_2, \dots; \hbar)}.
\ee
Rewrite $w$ in the following form:
\be \label{eqn:w}
w = \exp \biggl( \hbar^{-1} \sum_{n=1}^\infty T_n \xi^n \biggr) \cdot
\biggl( 1 + \frac{w_1}{\xi} + \frac{w_2}{\xi^2} + \cdots \biggr).
\ee
The dressing operator $W$ is defined by:
\be \label{eqn:W}
W:= 1 + \sum_{n=1}^\infty w_j \pd_x^{-j}.
\ee
The Lax operator $L$ and the Orlov-Schulman operators are defined by:
\bea
&& L : = W \circ \hbar \pd_x \circ W^{-1},  \label{eqn:Lax} \\
&& M : = W\biggl( \sum_{n=1}^\infty n T_n (\hbar \pd_x)^{n-1} \biggr) W^{-1}.
\eea
One can use the dressing operator to
associate an element $V$ corresponding to the tau-function as follows:
\be
V = W^{-1} H_+.
\ee
By a pair $(\hat{P}_0, \hat{Q}_0)$ of Kac-Schwarz operators we mean two differential operators in $z$ such that
\begin{align}
\hat{P}_0V & \subset V, & \hat{Q}_0 V & \subset V,
\end{align}
and furthermore,
$V$ has an admissible basis $\{\varphi_n\}_{n \geq 0}$ such that
\bea
&& \hat{P}_0 \varphi_0 = 0, \\
&& \hat{Q}_0 \varphi_n = \varphi_{n=1}, \;\; n \geq 0.
\eea
We denote by $P_0$ and $Q_0$ the differential operators corresponding to $P_0$ and $Q_0$ respectively,
and let $P$ and $Q$ are differential operators obtained by KP flows with $P_0$ and $Q_0$ as
initial values respectively.

One can convert the conditions that
\be \label{eqn:V}
\hat{P}_0^k \hat{Q}_0^lV \subset V, \;\;\; k, l \geq 0,
\ee
into constraints on the tau-function.
These will be called the $W$-constraints.

\section{Tau-Function from Hurwitz numbers and Linear Hodge Integrals}

\subsection{Operator formalism of Hurwitz numbers and liner Hodge integrals}

Let $Z(\bp; \lambda)$ be the generating series of disconnected Hurwitz numbers.
By ELSV formula it is also the generating series of the following special linear Hodge integrals
\cite{Zhou-Hurwitz}:
\be
\frac{1}{z_\mu} \prod_{j=1}^{l(\mu)} \frac{\mu_j^{\mu_j+1}}{\mu_j!}
\cdot \int_{\Mbar_{g, l(\mu)}} \frac{\Lambda^\vee(1)}{\prod_{j=1}^{\mu_j} (1-\mu_j \psi_j)}.
\ee
It satisfies the cut-and-join equation \cite{GJV}:
\be
\frac{\pd}{\pd \lambda} Z(\bp; \lambda) = K_0 Z(\bp; \lambda)
\ee
and the following initial condition
\be
Z(\bp; \lambda) = e^{p_1},
\ee
and so one has \cite{Kaz-Lan}:
\be
Z(\bp; \lambda) = e^{\lambda K_0} e^{p_1}.
\ee

\subsection{Kac-Schwarz operators for Hurwitz numbers}
The corresponding element in $\Gr^{(0)}$ is spanned by the following series
\be
\begin{split}
\varphi_j = & e^{\frac{\lambda}{2}[(z\pd_z+1/2)^2- (j+1/2)^2]} e^{1/z}z^j \\
= & \sum_{n=0}^\infty e^{\frac{\lambda}{2}[(j-n+1/2)^2-(j+1/2)^2]} \frac{z^{j-n}}{n!},
\end{split}
\ee
they can be written as:
\be \label{eqn:Basis}
\varphi_j
= \sum_{n=0}^\infty e^{n(n-(2j+1))\lambda /2]} \frac{z^{j-n}}{n!}.
\ee
In particular,
\be
\varphi_0 = \sum_{n=0}^\infty e^{n(n-1)\lambda/2} \frac{z^{-n}}{n!}.
\ee
This matches with \cite[(13)]{Zhou-Quant}.
Define two operators $\hat{P}_0, \hat{Q}_0$ as follows:
\bea
&& \hat{P}_0 = z \pd_z + z^{-1} e^{-\lambda z\pd_z}, \\
&& \hat{Q}_0 = z e^{\lambda z\pd_z}.
\eea
Then one has \cite{Ale}:
\bea
&& [\hat{P}_0, \hat{Q}_0] = \hat{Q}_0, \\
&& \hat{Q}_0 \varphi_j = e^{j\lambda} \varphi_{j+1},  \\
&& \hat{P}_0 \varphi_j = j \varphi_j.
\eea
for $j =0, 1, 2, \dots$.
These can be checked directly from \eqref{eqn:Basis}.
In particular,
\be
\hat{P}_0 \varphi_0 = 0.
\ee
This matches with \cite[(17)]{Zhou-Quant}.

\section{Tau-Function from One-Partition Triple Hodge Integrals}

\subsection{One-partition triple Hodge Integrals}

For a partition $\mu$,
consider the following generating series of Hodge integrals:
\begin{eqnarray*}
G_{\mu}(r; \lambda)
& = & - \frac{\sqrt{-1}^{l(\mu)}}{z_{\mu}}
\cdot \left[r(r+1)\right]^{l(\mu)-1}
\cdot \prod_{i=1}^{l(\mu)} \frac{\prod_{a=1}^{\mu_i-1}
\left( \mu_ir + a \right)}{\mu_i!}\\
&& \cdot \sum_{g \geq 0} \lambda^{2g-2} \int_{\Mbar_{g, l(\mu)}}
\frac{\Lambda_{g}^{\vee}(1)\Lambda^{\vee}_{g}(r)\Lambda_{g}^{\vee}(-1 - r)}
{\prod_{i=1}^{l(\mu)} \frac{1}{\mu_i} \left(\frac{1}{\mu_i} - \psi_i\right)},
\end{eqnarray*}
and their generating series:
\begin{eqnarray*}
G^{\bullet}(r; \lambda; \bp) & = &\exp \left(
\sum_{\mu} G_{\mu}(r; \lambda)p_{\mu}\right).
\end{eqnarray*}
We have proved in \cite{Zhou-Hodge} that $G^\bullet$ is a tau-function of the KP hierarchy.
This is established by using
Mari\~no-Vafa formula \cite{Mar-Vaf, LLZ, Oko-Pan} to get:
\be
\begin{split}
& G^{\bullet}(r; \lambda; \bp) \\
= & \langle 0| \exp\left( \sum_{n > 0} \frac{p_n}{n} \alpha_{n}\right)
q^{(r+1)K_0}
 \exp\left( \sum_{n > 0} \frac{(-1)^{n-1}}{q^{n/2} -q^{-n/2}} \frac{\alpha_{-n}}{n} \right)|0\rangle,
\end{split}
\ee
where $q= e^{\sqrt{-1} \lambda}$.

The corresponding element in $\Gr^{(0)}$ is spanned by the following series
\be
\begin{split}
\varphi_j = &
\exp \biggl(\frac{\sqrt{-1}(r+1)\lambda}{2}[(z\pd_z+1/2)^2- (j+1/2)^2]\biggr) \\
&
\exp\left( \sum_{n > 0} \frac{(-1)^{n-1}}{n(q^{n/2} -q^{-n/2})} z^{-n}\right) z^j.
\end{split}
\ee

\begin{lem}
The series $\varphi_j$ can be explicitly written as:
\be \label{eqn:Basis}
\varphi_j
=  \sum_{n=0}^\infty
\frac{(-1)^n q^{(r+1)n(n-2j-1)/2-n(n-1)/4}}{[n]!} z^{j-n}.
\ee
In particular,
\be
\varphi_0 = \sum_{n=0}^\infty \frac{q^{(r+1)n(n-1)/2+n/2}}{\prod_{k=1}^n (1- q^k)} z^{-n}.
\ee
This matches with \cite[(26)]{Zhou-Quant}.
\end{lem}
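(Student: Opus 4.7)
\medskip

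\noindent\textbf{Proof plan.}
The two exponentials act on different ``directions''. The outer one is a differential operator in $z$; the inner one is multiplication by a formal power series in $z^{-1}$. My plan is to compute each factor separately and then combine.

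First I would evaluate the multiplication operator. Writing $\frac{1}{q^{n/2}-q^{-n/2}}=-\frac{q^{n/2}}{1-q^{n}}=-\sum_{k\ge 0}q^{n(k+1/2)}$ and interchanging the summation order gives
\[
\sum_{n\ge 1}\frac{(-1)^{n-1}}{n(q^{n/2}-q^{-n/2})}z^{-n}
=-\sum_{k\ge 0}\sum_{n\ge 1}\frac{(-1)^{n-1}}{n}(q^{k+1/2}z^{-1})^{n}
=-\sum_{k\ge 0}\log\bigl(1+q^{k+1/2}z^{-1}\bigr),
\]
so the exponential collapses to the infinite product $\prod_{k\ge 0}(1+q^{k+1/2}z^{-1})^{-1}$. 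By Euler's $q$-series identity $\prod_{k\ge 0}(1-xq^{k})^{-1}=\sum_{n\ge 0}x^{n}/(q;q)_{n}$, substituting $x=-q^{1/2}z^{-1}$ gives the explicit expansion $\sum_{n\ge 0}(-1)^{n}q^{n/2}(q;q)_{n}^{-1}z^{-n}$. Multiplying by $z^{j}$ turns the result into a series in $z^{j-n}$.

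Next I would deal with the outer exponential. Since $(z\pd_z+\tfrac12)^{2}$ is diagonal on the monomial basis, acting on $z^{k}$ as $(k+\tfrac12)^{2}$, and since $q=e^{\sqrt{-1}\lambda}$, the operator $\exp\!\bigl(\tfrac{\sqrt{-1}(r+1)\lambda}{2}[(z\pd_z+\tfrac12)^{2}-(j+\tfrac12)^{2}]\bigr)$ acts on $z^{j-n}$ as the scalar $q^{(r+1)[(j-n+1/2)^{2}-(j+1/2)^{2}]/2}$. Expanding the bracket yields exactly $n(n-2j-1)$, so the scalar is $q^{(r+1)n(n-2j-1)/2}$. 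Applying this factor termwise to the series obtained in the first step produces
\[
\varphi_j=\sum_{n\ge 0}\frac{(-1)^{n}\,q^{(r+1)n(n-2j-1)/2+n/2}}{(q;q)_{n}}\,z^{j-n}.
\]

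The last step is the bookkeeping that converts $(q;q)_{n}=\prod_{k=1}^{n}(1-q^{k})$ to the symmetric $q$-factorial $[n]!=\prod_{k=1}^{n}(q^{k/2}-q^{-k/2})$. Using $q^{k/2}-q^{-k/2}=-q^{-k/2}(1-q^{k})$ gives $(q;q)_{n}=(-1)^{n}q^{n(n+1)/4}[n]!$, and substituting simplifies the exponent to $(r+1)n(n-2j-1)/2-n(n-1)/4$, producing the claimed formula \eqref{eqn:Basis}. Specialising to $j=0$ and using the unsymmetric form directly reproduces the expression for $\varphi_0$, which matches \cite[(26)]{Zhou-Quant}.

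The only nonroutine ingredient is the recognition that $1/(q^{n/2}-q^{-n/2})$ expands as a geometric series which, after reversing the order of summation, matches the logarithm expansion; once this observation is made, the rest is an application of Euler's identity and tracking of $q$-powers. I do not expect any genuine obstacle, but I would take care with the sign conventions linking $(q;q)_{n}$ and $[n]!$, since this is where the two displayed forms of $\varphi_j$ in the lemma have to be reconciled.
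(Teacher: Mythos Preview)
Your proposal is correct and follows essentially the same approach as the paper's own proof: expand $1/(q^{n/2}-q^{-n/2})$ as a geometric series, recognise the resulting logarithm so that the exponential becomes the infinite product $\prod_{m\ge 1}(1+q^{m-1/2}/z)^{-1}$, apply Euler's $q$-series identity to obtain $\sum_{n\ge 0}(-1)^n q^{n/2}(q;q)_n^{-1}z^{-n}$, and then act termwise with the diagonal operator $\exp\bigl(\tfrac{\sqrt{-1}(r+1)\lambda}{2}[(z\pd_z+1/2)^2-(j+1/2)^2]\bigr)$. The only addition in your write-up is the explicit bookkeeping between $(q;q)_n$ and $[n]!$, which the paper leaves implicit; note that with $(q;q)_n=(-1)^n q^{n(n+1)/4}[n]!$ the factor $(-1)^n$ actually cancels, so you should flag that the sign in the displayed formula \eqref{eqn:Basis} is a cosmetic discrepancy rather than an error in your computation.
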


\begin{proof}
Note we have:
\ben
&&  \exp\left( \sum_{n > 0} \frac{(-1)^{n-1}}{n(q^{n/2} -q^{-n/2})} z^{-n} \right) \\
& = & \exp\left( \sum_{n > 0} \frac{(-1)^n}{n}
\sum_{m=1}^\infty q^{(m-1/2)n} z^{-n} \right) \\
& = & \prod_{m=1}^\infty \frac{1}{1+q^{m-1/2}/z}
= \sum_{n=0}^\infty \frac{(-1)^n q^{n/2}}{\prod_{k=1}^n (1-q^k)} z^{-n},
\een
where $[n]!=\prod_{j=1}^n (q^{j/2} - q^{-j/2})$.
Here in the last equality we have used:
\be
\prod_{m=1}^\infty \frac{1}{1-q^m x}
= 1 + \sum_{n\geq 1} \frac{q^n}{\prod_{j=1}^n (1-q^j)}x^n.
\ee
Now we have:
\ben
\varphi_j & = &
\exp \biggl(\frac{\sqrt{-1}(r+1)\lambda}{2}[(z\pd_z+1/2)^2- (j+1/2)^2]\biggr) \\
&&   \sum_{n=0}^\infty \frac{(-1)^n q^{n/2}}{\prod_{k=1}^n (1-q^k)} z^{j-n} \\
& = & \sum_{n=0}^\infty
\exp \biggl(\frac{\sqrt{-1}(r+1)\lambda}{2}[(j-n+1/2)^2- (j+1/2)^2]\biggr) \\
&& \cdot \frac{(-1)^n q^{n/2}}{\prod_{k=1}^n (1-q^k)} z^{j-n} \\
& = & \sum_{n=0}^\infty \frac{(-1)^n q^{(r+1)n(n-2j-1)/2+n/2}}{\prod_{k=1}^n (1-q^k)} z^{j-n}.
\een
\end{proof}

\begin{thm}
Define two operators $\hat{P}, \hat{Q}$ as follows:
\bea
&& \hat{P}_0 = 1 - q^{-z \pd_z} - q^{1/2} z^{-1} q^{-(r+1) z\pd_z}, \\
&& \hat{Q}_0 = z q^{(r+1) z \pd_z}.
\eea
Then one has:
\bea
&& \hat{Q}_0 \varphi_j = q^{(r+1)j} \varphi_{j+1},  \\
&& \hat{P}_0 \varphi_j = (1-q^{-j}) \varphi_j - q^{1/2- (r+1)j}(1-q^{-j}) \varphi_{j-1},
\eea
for $j =0, 1, 2, \dots$,
where $\varphi_{-1} = 0$.
In particular,
\be
\hat{P}_0 \varphi_0 = 0.
\ee
This matches with \cite[(32)]{Zhou-Quant}.
Furthermore,
these operators satisfy the following commutation relations:
\be \label{eqn:[P,Q]}
[\hat{P}_0, \hat{Q}_0 ] = (1 -q^{-1}) z q^{r z \pd_z}.
\ee
\end{thm}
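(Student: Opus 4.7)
My plan is to verify all four assertions by direct computation on the explicit series
$\varphi_j = \sum_{n \geq 0} c_n(j) z^{j-n}$, where I write $c_n(j) := q^{(r+1)n(n-2j-1)/2 + n/2} / \prod_{k=1}^n (1-q^k)$, as given by the preceding lemma. The only operator identities I need are $q^{a z \pd_z} z^k = q^{ak} z^k$ and the Weyl-type conjugation $f(z\pd_z) \cdot z^{\pm 1} = z^{\pm 1} \cdot f(z\pd_z \pm 1)$, valid for any formal series $f$.

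For the action of $\hat{Q}_0 = z q^{(r+1) z\pd_z}$, each monomial $z^{j-n}$ is sent to $q^{(r+1)(j-n)} z^{j-n+1}$; applied term by term and with the overall factor $q^{(r+1)j}$ extracted, this matches $q^{(r+1)j} \varphi_{j+1}$ after a one-line check that $n(n-2j-1) + 2(j-n) = n(n-2j-3) + 2j$. For the action of $\hat{P}_0$, I split the operator into three pieces. The first two, $1 - q^{-z\pd_z}$, act diagonally on $z^{j-n}$ by multiplication by $1 - q^{-(j-n)}$, while the third, $-q^{1/2} z^{-1} q^{-(r+1) z\pd_z}$, sends $z^{j-n}$ to $-q^{1/2 - (r+1)(j-n)} z^{j-n-1}$. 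Equating coefficients of $z^{j-n}$ on the two sides of the claimed formula reduces the assertion to the three-term recurrence
\[
q^{-j}(1 - q^n) c_n(j) \;=\; q^{1/2 - (r+1)(j-n+1)} c_{n-1}(j) \;-\; q^{1/2 - (r+1)j}(1 - q^{-j}) c_{n-1}(j-1).
\]
The two ratios $c_n(j)/c_{n-1}(j)$ and $c_{n-1}(j-1)/c_{n-1}(j)$ each simplify to a single monomial in $q$, after which the recurrence collapses to a short exponent identity. The boundary convention $\varphi_{-1} = 0$ corresponds to setting $c_{-1}(\cdot) = 0$, consistent with the empty-sum case $n = 0$.

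The statement $\hat{P}_0 \varphi_0 = 0$ is then immediate: specializing the $\hat{P}_0 \varphi_j$ formula at $j = 0$ annihilates both prefactors $1 - q^{-j}$. For the commutator, I write $\hat{P}_0 = A + B$ with $A = 1 - q^{-z\pd_z}$ and $B = -q^{1/2} z^{-1} q^{-(r+1) z\pd_z}$. The relation $q^{-z\pd_z} z = q^{-1} z q^{-z\pd_z}$ yields $[A, \hat{Q}_0] = (1 - q^{-1}) z q^{r z \pd_z}$ directly. For $B$, the analogous conjugations on either side of $\hat{Q}_0$ show that both $B \hat{Q}_0$ and $\hat{Q}_0 B$ collapse to the same scalar $-q^{-1/2 - r}$, so $[B, \hat{Q}_0] = 0$; summing the two contributions gives the asserted commutator.

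The only step I expect to be a real obstacle is the three-term $q$-recurrence above; it mixes $\varphi_j$ with $\varphi_{j-1}$ and requires a nontrivial cancellation among three $q$-exponentials that only succeeds once the coefficient ratios are put in closed form. All other ingredients are routine substitutions into the series or a single application of $f(z\pd_z) z^{\pm 1} = z^{\pm 1} f(z\pd_z \pm 1)$.
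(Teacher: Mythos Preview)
Your proposal is correct and follows essentially the same route as the paper: both verify $\hat{Q}_0\varphi_j$ and $\hat{P}_0\varphi_j$ by applying the operators term by term to the explicit series for $\varphi_j$, and both obtain the commutator from the Weyl-type relation $q^{a z\pd_z}z^{\pm1}=q^{\pm a}z^{\pm1}q^{a z\pd_z}$. Your phrasing of the $\hat{P}_0$ computation as a three-term coefficient recurrence and your observation that $B\hat{Q}_0=\hat{Q}_0B=-q^{-1/2-r}$ (so $[B,\hat{Q}_0]=0$) are slightly cleaner packagings of the same manipulations the paper carries out line by line, but there is no substantive difference in method.
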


\begin{proof}
These can be checked directly from \eqref{eqn:Basis}.
\ben
\hat{Q}_0 \varphi_j
& = & z q^{(r+1) z \pd_z} \sum_{n=0}^\infty
\frac{q^{(r+1)n(n-2j-1)/2+n/2}}{\prod_{k=1}^n (1-q^k)} z^{j-n} \\
& = & \sum_{n=0}^\infty
\frac{q^{(r+1)n(n-2j-1)/2+n/2}}{\prod_{k=1}^n (1-q^k)}
q^{(r+1)(j-n)} z^{j+1-n} \\
& = & q^{(r+1)j} \sum_{n=0}^\infty
\frac{q^{(r+1)n(n-2(j+1)-1)/2+n/2}}{\prod_{k=1}^n (1-q^k)} z^{j+1-n} \\
& = & q^{(r+1)j} \varphi_{j+1},
\een
and
\ben
\hat{P}_0 \varphi_j
& = & \sum_{n=0}^\infty
\frac{q^{(r+1)n(n-2j-1)/2+n/2}}{\prod_{k=1}^n (1-q^k)} z^{j-n} \\
& - & \sum_{n=0}^\infty
\frac{q^{(r+1)n(n-2j-1)/2+n/2}}{\prod_{k=1}^n (1-q^k)} q^{n-j} z^{j-n} \\
& - & q^{1/2} z^{-1} \sum_{n=0}^\infty
\frac{q^{(r+1)n(n-2j-1)/2+n/2}}{\prod_{k=1}^n (1-q^k)} q^{(r+1)n - (r+1)j} z^{j-n}\\
& = & \sum_{n=0}^\infty
\frac{q^{(r+1)n(n-2j-1)/2+n/2}}{\prod_{k=1}^n (1-q^k)} z^{j-n} \\
& - & q^{-j} \sum_{n=0}^\infty
\frac{q^{(r+1)n(n-2j-1)/2+n/2}}{\prod_{k=1}^n (1-q^k)}  z^{j-n} \\
& + & q^{-j} \sum_{n=1}^\infty
\frac{q^{(r+1)n(n-2j-1)/2+n/2}}{\prod_{k=1}^{n-1} (1-q^k)}  z^{j-n} \\
& - & q^{1/2- (r+1)j}  \sum_{n=0}^\infty
\frac{q^{(r+1)n(n-2(j-1)-1)/2+n/2}}{\prod_{k=1}^n (1-q^k)} z^{j-1-n} \\
& = & (1-q^{-j}) \varphi_j - q^{1/2- (r+1)j}(1-q^{-j}) \varphi_{j-1}.
\een
By induction one can show that:
\be
[(z\pd_z)^n, z] = z ((1+z\pd_z)^n- (z\pd_z)^n),
\ee
it follows that
\be
[e^{t z\pd_z}, z] = (e^t-1) z e^{t z\pd_z}.
\ee
Therefore,
\ben
[\hat{P}_0, \hat{Q}_0 ]
& = & [1 - q^{-z \pd_z} - q^{1/2} z^{-1} q^{-(r+1)  z\pd_z},
z q^{(r+1)  z \pd_z} ] \\
& = & -[q^{-z \pd_z}, z] q^{(r+1)  z \pd_z}
-q^{1/2} [z^{-1} q^{-(r+1)  z\pd_z},
z q^{(r+1)  z \pd_z} ]  \\
& = & -[q^{-z \pd_z}, z] q^{(r+1)  z \pd_z}
-q^{1/2} [z^{-1} q^{-(r+1)  z\pd_z}, z] q^{(r+1)  z \pd_z}   \\
& - & q^{1/2} z [z^{-1} q^{-(r+1)  z\pd_z}, q^{(r+1)  z \pd_z} ]   \\
& = & -[q^{-z \pd_z}, z] q^{(r+1)  z \pd_z}
-q^{1/2} z^{-1} [q^{-(r+1)  z\pd_z}, z] q^{(r+1)  z \pd_z}   \\
& - & q^{1/2} z [z^{-1} , q^{(r+1)  z \pd_z} ] q^{-(r+1)  z\pd_z}  \\
& = & - (q^{-1} - 1) z q^{r z \pd_z} -q^{1/2}  (q^{-(r+1)}-1)
+ q^{1/2} (q^{-(r+1)}-1) \\
& = & (1 -q^{-1}) z q^{r z \pd_z}.
\een
\end{proof}

\begin{rmk}
It is interesting to consider the Lie algebra generated by
the two operators $\hat{P}_0$ and $\hat{Q}_0$.
\end{rmk}

\section{Tau-Function of KP Hierarchy from Resolved Conifold}

\subsection{Open string amplitudes with one outer brane}

Let us recall some results from \cite{Zhou-MV}.
One can compute the open string amplitude for the resolved conifold with one outer brane and framing $a$
by the theory of the topological vertex \cite{AKMV, LLLZ}.
There are two possibilities,
corresponding to two different toric diagrams as follows:
$$
\xy
(0,10); (0,0), **@{-}; (-7, -7), **@{-}; (-10, -4), **@{.};
(0,0); (20,0), **@{-}; (27,7), **@{-}; (20,0); (20, -10), **@{-};
(-5, -2)*+{\mu}; (5,1.5)*+{\nu^t}; (15,1)*+{\nu}; (11,-14)*+{(i)};
\endxy
\qquad
\xy
(-4,5); (0,10), **@{.}; (0,0), **@{-}; (-7, -7), **@{-};
(0,0); (20,0), **@{-}; (27,7), **@{-}; (20,0); (20, -10), **@{-};
(-2, 3)*+{\mu}; (5,1.5)*+{\nu^t}; (15,1)*+{\nu}; (11,-14)*+{(ii)};
\endxy
$$
\begin{center}
{\bf Figure 1.}
\end{center}
The normalized open string amplitude of the resolved conifold with one outer brane as in Figure 1(a)
and framing $a \in \bZ$ is given by:
\be \label{eqn:ZaOperator1}
\begin{split}
& \hat{Z}^{(a)}(\lambda;t; \bp) \\
= & \langle \exp (\sum_{n=1}^\infty \frac{p_n}{n} \alpha_n) q^{(a+1)K_0}
\exp \biggl( \sum_{n=1}^\infty \frac{(-1)^{n-1}(1-e^{-nt})}{[n]}
 \frac{\alpha_{-n}}{n} \biggr) \rangle.
\end{split}
\ee
The normalized open string amplitude of the resolved conifold with one outer brane
as in Figure 1(b)
and framing $a \in \bZ$ is given by:
\be \label{eqn:ZbOperator1}
\hat{\tilde{Z}}^{(a)}(\lambda;t; \bp)
=  \langle \exp (\sum_{n=1}^\infty \frac{p_n}{n} \beta_n) q^{aK_0}
\exp \biggl( \sum_{n=1}^\infty \big( \frac{1}{[n]} - \frac{e^{-nt}}{[n]} \big) \frac{\beta_{-n}}{n} \biggr) \rangle.
\ee
These give us two tau-functions of the KP hierarchy.

\subsection{The first case}

The  element in $\Gr^{(0)}$ corresponding to $\hat{Z}^{(a)}$ is spanned by the following series
\be
\begin{split}
\varphi_j = &
\exp \biggl(\frac{\sqrt{-1}(a+1)\lambda}{2}[(z\pd_z+1/2)^2- (j+1/2)^2]\biggr) \\
&
\exp\left( \sum_{n > 0} \frac{(-1)^{n-1}(1-e^{-nt})}{[n]} \frac{z^{-n}}{n} \right) z^j.
\end{split}
\ee

\begin{thm}
The series $\varphi_n$  can be explicitly written as:
\be \label{eqn:Con-Basis}
\varphi_j
= \sum_{n=0}^\infty \frac{(-1)^n q^{(a+1)n(n-2j-1)/2}\prod_{k=1}^n (1- e^{-t} q^{k-1})}{\prod_{k=1}^n (1-q^k)}
q^{n/2} z^{j-n}.
\ee
In particular,
\be
\varphi_0 = \sum_{n=0}^\infty \frac{(-1)^n q^{(a+1)n(n-1)/2}\prod_{k=1}^n (1- e^{-t} q^{k-1})}{\prod_{k=1}^n (1-q^k)}
q^{n/2} z^{-n}.
\ee
This matches with \cite[(45)]{Zhou-Quant}.
\end{thm}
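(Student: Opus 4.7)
The proof will closely follow the structure of the preceding Marino-Vafa lemma, with the main new ingredient being the $q$-binomial theorem (Heine's identity) in place of the Euler identity used there.

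The plan is to first simplify the second exponential factor
$$E(z) := \exp\left(\sum_{n>0}\frac{(-1)^{n-1}(1-e^{-nt})}{[n]}\frac{z^{-n}}{n}\right)$$
into a closed $q$-hypergeometric form, and then apply the quadratic operator term by term as in the previous proof. First I would expand
$$\frac{1}{q^{n/2}-q^{-n/2}} = -\sum_{m=1}^\infty q^{(m-1/2)n},$$
(treating $|q|<1$), so that after splitting the factor $(1-e^{-nt})$ and using $\sum_{n\geq 1}\frac{(-1)^n}{n}w^n=-\log(1+w)$, the exponent becomes
$$\sum_{m=1}^\infty\bigl[\log(1+e^{-t}q^{m-1/2}/z)-\log(1+q^{m-1/2}/z)\bigr].$$
Exponentiating,
$$E(z) = \prod_{m=0}^\infty\frac{1+e^{-t}q^{m+1/2}/z}{1+q^{m+1/2}/z} = \frac{(-e^{-t}q^{1/2}/z;q)_\infty}{(-q^{1/2}/z;q)_\infty}.$$

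Next I would apply Heine's $q$-binomial theorem
$$\frac{(au;q)_\infty}{(u;q)_\infty} = \sum_{n\geq 0}\frac{(a;q)_n}{(q;q)_n}u^n$$
with $u=-q^{1/2}/z$ and $a=e^{-t}$, which gives
$$E(z) = \sum_{n=0}^\infty\frac{\prod_{k=1}^n(1-e^{-t}q^{k-1})}{\prod_{k=1}^n(1-q^k)}(-1)^n q^{n/2}z^{-n}.$$
This reproduces the required coefficients up to the overall $q^{(a+1)n(n-2j-1)/2}$ factor.

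Finally, since $E(z)z^j$ is a sum of monomials $c_n z^{j-n}$, and $z^{j-n}$ is an eigenvector of the quadratic operator $(z\partial_z+1/2)^2$ with eigenvalue $(j-n+1/2)^2$, I would compute
$$(j-n+1/2)^2 - (j+1/2)^2 = n(n-2j-1),$$
so that the first exponential multiplies $c_n z^{j-n}$ by $q^{(a+1)n(n-2j-1)/2}$, yielding \eqref{eqn:Con-Basis}. Setting $j=0$ gives the second formula, matching \cite[(45)]{Zhou-Quant}.

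The only non-routine step is the recognition and correct application of Heine's identity; once the double sum is rearranged into two logarithmic series and the infinite-product ratio $(-e^{-t}q^{1/2}/z;q)_\infty/(-q^{1/2}/z;q)_\infty$ is identified, the remainder of the computation is bookkeeping of signs and exponents in parallel with the Marino-Vafa case. The main conceptual obstacle is keeping track of the half-integer shifts in the $q$-powers; a direct comparison with the previous lemma (to which this reduces when $e^{-t}=0$, i.e.\ $(e^{-t};q)_n = 1$ formally replaced by $q^{n(n-1)/2}$ from $(0;q)_n$) serves as a useful consistency check.
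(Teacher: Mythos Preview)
Your proof is correct and follows essentially the same route as the paper: expand $1/[n]$ as a geometric series to obtain the infinite-product ratio, apply the $q$-binomial theorem (the paper states it as the identity \eqref{eqn:Product}, which is Heine's theorem with $a\mapsto aq$), and then act with the quadratic exponential term by term using $(j-n+\tfrac12)^2-(j+\tfrac12)^2=n(n-2j-1)$. One small slip in your closing parenthetical: $(0;q)_n=1$, not $q^{n(n-1)/2}$, so the $e^{-t}\to 0$ limit already matches the Mari\~no--Vafa coefficients directly without any ``formal replacement.''
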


\begin{proof}
Note that
\ben
&&  \exp\left( \sum_{n > 0} \frac{(-1)^{n-1}+(-e^{-t})^n}{n(q^{n/2} -q^{-n/2})} z^{-n} \right) \\
& = & \exp\left( \sum_{n > 0} \frac{(-1)^n(1-e^{-nt})}{n}
\sum_{m=1}^\infty q^{(m-1/2)n} z^{-n} \right) \\
& = & \prod_{m=1}^\infty \frac{1+e^{-t}q^{m-1/2}/z}{1+q^{m-1/2}/z} \\
& = & \sum_{n=0}^\infty (-1)^n \prod_{j=1}^n \frac{1-e^{-t} q^{j-1}}{1-q^j}
\cdot q^{n/2} z^{-n}.
\een
Here in the last equality we use the following well-known identity:
\be \label{eqn:Product}
\prod_{m=0}^\infty \frac{1-aq^{m+1}x}{1-q^m x}
= 1 + \sum_{n\geq 1} \frac{\prod_{j=1}^n (1-aq^j)}{\prod_{j=1}^n (1-q^j)} x^n.
\ee
Therefore,
\ben
\varphi_j & = &
\exp \biggl(\frac{\sqrt{-1}(a+1)\lambda}{2}[(z\pd_z+1/2)^2- (j+1/2)^2]\biggr) \\
&& \sum_{n=0}^\infty (-1)^n \prod_{k=1}^n \frac{1-e^{-t} q^{k-1}}{1-q^k}
\cdot q^{n/2} z^{j-n} \\
& = & \sum_{n=0}^\infty
\exp \biggl(\frac{\sqrt{-1}(a+1)\lambda}{2}[(j-n+1/2)^2- (j+1/2)^2]\biggr) \\
&& \cdot (-1)^n \prod_{k=1}^n \frac{1-e^{-t} q^{k-1}}{1-q^k}
\cdot q^{n/2} z^{j-n} \\
& = & \sum_{n=0}^\infty \frac{(-1)^n q^{(a+1)n(n-2j-1)/2}\prod_{k=1}^n (1- e^{-t} q^{k-1})}{\prod_{k=1}^n (1-q^k)}
q^{n/2} z^{j-n}.
\een
\end{proof}

\begin{thm} \label{thm:Conifold}
Define two operators $\hat{P}_0, \hat{Q}_0$ as follows:
\bea
&& \hat{P}_0 = 1 - q^{- z\pd_z} + q^{1/2}
z^{-1} q^{-(a+1) z \pd_z}
- q^{1/2} e^{-t} z^{-1} q^{- (a+2)  z\pd_z}, \\
&& \hat{Q}_0 = z q^{(a+1) z \pd_z}.
\eea
Then one has:
\bea
&& \hat{Q}_0 \varphi_j = q^{(a+1)j} \varphi_{j+1},  \label{eqn:Qphi} \\
&& \hat{P}_0 \varphi_j = (1-q^{-j}) \varphi_j + q^{1/2- (r+1)j}(1-q^{-j}) \varphi_{j-1}, \label{eqn:Pphi}
\eea
for $j =0, 1, 2, \dots$,
where $\varphi_{-1} = 0$.
In particular,
\be
\hat{P}_0 \varphi_0 = 0.
\ee
This matches with \cite[(49)]{Zhou-Quant}.
Furthermore,
\be \label{eqn:[P,Q]2}
[\hat{P}_0, \hat{Q}_0] = (1-q^{-1}) zq^{a z\pd_z}
+ (1-q^{-1})e^{-t}q^{-1/2-a} q^{-z\pd_z}.
\ee
\end{thm}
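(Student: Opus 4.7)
The plan is to verify the three assertions of the theorem by direct computation from the explicit series \eqref{eqn:Con-Basis} for $\varphi_j$, following the template of the preceding theorem on one-partition triple Hodge integrals. The key fact is that $q^{cz\pd_z}$ acts diagonally on monomials by $q^{cz\pd_z} z^m = q^{cm} z^m$, so each assertion reduces to an algebraic identity on the coefficients of $z^{j-n}$ for each $n\geq 0$.

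For $\hat{Q}_0 \varphi_j$, applying $z q^{(a+1)z\pd_z}$ term by term multiplies the $n$-th summand of $\varphi_j$ by $q^{(a+1)(j-n)}$ and shifts $z^{j-n}$ to $z^{j+1-n}$; rearranging the exponent $(a+1)n(n-2j-1)/2 + n/2 + (a+1)(j-n)$ as $(a+1)j + (a+1)n(n-2(j+1)-1)/2 + n/2$ identifies the result with $q^{(a+1)j}\varphi_{j+1}$, exactly as in the previous theorem. For $\hat{P}_0\varphi_j$ I would split $\hat{P}_0$ into its four summands and collect contributions to each power of $z$. The pair $1 - q^{-z\pd_z}$ multiplies the $n$-th coefficient of $\varphi_j$ by $1 - q^{-j+n} = (1 - q^{-j}) + q^{-j}(1 - q^n)$; the first term yields $(1-q^{-j})\varphi_j$, while the factor $(1-q^n)$ in the second telescopes against the denominator $\prod_{k=1}^n(1-q^k)$ and produces a shifted series in $z^{j-1-n}$. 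The remaining two summands $q^{1/2}z^{-1}q^{-(a+1)z\pd_z}$ and $-q^{1/2}e^{-t}z^{-1}q^{-(a+2)z\pd_z}$ both shift $z^{j-n}$ to $z^{j-n-1}$; reindexing $m=n+1$, their combined contribution carries a factor $(1 - e^{-t}q^{m-1})$, which is precisely the factor required to build up $\prod_{k=1}^{m}(1-e^{-t}q^{k-1})$ in $\varphi_{j-1}$. Matching coefficient by coefficient, the three contributions in $z^{j-1-n}$ fuse into $q^{1/2-(a+1)j}(1-q^{-j})\varphi_{j-1}$ (the $(r+1)$ in \eqref{eqn:Pphi} should be read as $(a+1)$, consistent with the operators); the case $j=0$ gives $\hat{P}_0\varphi_0=0$ automatically. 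The main obstacle is the careful sign and reindexing bookkeeping needed to exhibit this three-way telescoping.

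For the commutator, I would invoke the identity $[q^{cz\pd_z}, z] = (q^c - 1) z q^{cz\pd_z}$, which is a direct consequence of $[e^{tz\pd_z}, z] = (e^t - 1) z e^{tz\pd_z}$ already established in the preceding proof. Expanding $[\hat{P}_0, \hat{Q}_0]$ by bilinearity in the four summands of $\hat{P}_0$, the constant $1$ contributes nothing. For each remaining summand, I would use Leibniz to split the explicit $z^{\pm 1}$ factors from the $q^{\bullet z\pd_z}$ factors, reducing each bracket to a scalar multiple of a monomial $z^c q^{dz\pd_z}$. The contribution from $-[q^{-z\pd_z}, z]q^{(a+1)z\pd_z}$ yields $(1-q^{-1})z q^{az\pd_z}$, while the bracket involving the third summand of $\hat{P}_0$ produces a pair of terms that partially cancel against each other, and the bracket involving the fourth summand contributes the new $e^{-t}$-dependent piece $(1-q^{-1})e^{-t}q^{-1/2-a}q^{-z\pd_z}$. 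Collecting everything yields precisely the right-hand side of \eqref{eqn:[P,Q]2}. The calculation parallels the commutator computation in the preceding theorem, with the extra fourth summand generating the second term on the right; the only real difficulty is sign tracking through the cancellations.
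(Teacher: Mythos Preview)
Your proposal is correct and follows essentially the same route as the paper: direct term-by-term verification of \eqref{eqn:Qphi} and \eqref{eqn:Pphi} from the explicit series \eqref{eqn:Con-Basis}, and a commutator computation built on the identity $[e^{tz\pd_z},z]=(e^t-1)ze^{tz\pd_z}$ already established in the preceding theorem. Two small inaccuracies to fix when you write it out: in the commutator, the bracket of the third summand $q^{1/2}z^{-1}q^{-(a+1)z\pd_z}$ with $\hat{Q}_0$ cancels \emph{completely} (not partially), and for the $\hat{P}_0\varphi_j$ computation the combined contribution of the last two summands at $z^{j-1-n}$ carries the factor $(1-e^{-t}q^{n-j})$ rather than $(1-e^{-t}q^{m-1})$---the telescoping that rebuilds $\prod_{k=1}^{n+1}(1-e^{-t}q^{k-1})$ comes from combining these with the leftover piece of the $q^{-j}(1-q^n)$ term, exactly as in the paper.
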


\begin{proof}
The equality \eqref{eqn:Qphi} and \eqref{eqn:Pphi} can be checked directly from \eqref{eqn:Con-Basis}.
\ben
\hat{Q}_0 \varphi_j
& = & z e^{(a+1) \lambda z \pd_z} \sum_{n=0}^\infty
\frac{(-1)^n q^{(a+1)n(n-2j-1)/2}\prod_{k=1}^n (1- e^{-t} q^{k-1})}{\prod_{k=1}^n (1-q^k)}
q^{n/2} z^{j-n} \\
& = & \sum_{n=0}^\infty \frac{(-1)^n q^{(a+1)n(n-2j-1)/2}\prod_{k=1}^n (1- e^{-t} q^{k-1})}{\prod_{k=1}^n (1-q^k)}
q^{n/2}
q^{(a+1)(j-n)} z^{j+1-n} \\
& = & q^{(a+1)j} \sum_{n=0}^\infty \frac{(-1)^n q^{(a+1)n(n-2(j+1)-1)/2}\prod_{k=1}^n (1- e^{-t} q^{k-1})}{\prod_{k=1}^n (1-q^k)}
q^{n/2}
 z^{j+1-n} \\
& = & q^{(a+1)j} \varphi_{j+1},
\een
and for \eqref{eqn:Pphi},
first note:
\ben
\hat{P}_0 \varphi_j
& = & \sum_{n=0}^\infty
\frac{(-1)^n q^{(a+1)n(n-2j-1)/2+n/2}\prod_{k=1}^n (1- e^{-t} q^{k-1})}{\prod_{k=1}^n (1-q^k)} z^{j-n} \\
& - & \sum_{n=0}^\infty
\frac{(-1)^n q^{(a+1)n(n-2j-1)/2+n/2}\prod_{k=1}^n (1- e^{-t} q^{k-1})}{\prod_{k=1}^n (1-q^k)} q^{n-j} z^{j-n} \\
& + & q^{1/2} z^{-1} \sum_{n=0}^\infty
\frac{(-1)^n q^{(a+1)n(n-2j-1)/2+n/2}\prod_{k=1}^n (1- e^{-t} q^{k-1})}{\prod_{k=1}^n (1-q^k)} \\
&& \cdot  q^{(r+1)(n-j)} z^{j-n} \\
& - & q^{1/2}e^{-t} z^{-1} \sum_{n=0}^\infty
\frac{(-1)^n q^{(a+1)n(n-2j-1)/2+n/2}\prod_{k=1}^n (1- e^{-t} q^{k-1})}{\prod_{k=1}^n (1-q^k)} \\
&& \cdot  q^{(r+2)(n-j)} z^{j-n}.
\een
The first term on the right-hand side is just $\varphi_j$,
the second term on the right-hand side can be rewritten as follows:
\ben
&&- q^{-j} \varphi_j  \\
& + & q^{-j} \sum_{n=1}^\infty
\frac{(-1)^n q^{(a+1)n(n-2j-1)/2+n/2}\prod_{k=1}^n (1- e^{-t} q^{k-1})}{\prod_{k=1}^{n-1} (1-q^k)} z^{j-n} \\
& = & - q^{-j} \varphi_j  \\
& + & q^{-j} \sum_{n=0}^\infty
\frac{(-1)^{n+1} q^{(a+1)(n+1)(n-2j)/2+(n+1)/2}\prod_{k=1}^{n+1} (1- e^{-t} q^{k-1})}{\prod_{k=1}^{n} (1-q^k)} z^{j-1-n} \\
& = & - q^{-j} \varphi_j  \\
& - & q^{-j} q^{1/2-(a+1)j} \sum_{n=0}^\infty
\frac{(-1)^n q^{(a+1)n(n-2j+1)/2+n/2}\prod_{k=1}^{n} (1- e^{-t} q^{k-1})}{\prod_{k=1}^{n} (1-q^k)} z^{j-1-n} \\
& + & q^{1/2-(r+2)j} e^{-t} \sum_{n=0}^\infty
\frac{(-1)^{n} q^{(a+1)n(n-2j+1)/2+n/2} q^n \prod_{k=1}^{n} (1- e^{-t} q^{k-1})}{\prod_{k=1}^{n} (1-q^k)} z^{j-1-n} \\
& = & - q^{-j} \varphi_j  - q^{-j} q^{1/2-(r+1)j} \varphi_{j-1} \\
& + & q^{1/2-(a+2)j} e^{-t} \sum_{n=0}^\infty
\frac{(-1)^{n} q^{(r+1)n(n-2j+1)/2+n/2} q^n \prod_{k=1}^{n} (1- e^{-t} q^{k-1})}{\prod_{k=1}^{n} (1-q^k)} z^{j-1-n};
\een
the third and the fourth terms on the right-hand side can be rewritten as follows:
\ben
&& q^{1/2-(a+1)j} \sum_{n=0}^\infty
\frac{(-1)^n q^{(r+1)n(n-2j+1)/2+n/2}\prod_{k=1}^n (1- e^{-t} q^{k-1})}{\prod_{k=1}^n (1-q^k)} z^{j-1-n} \\
& - & q^{1/2}e^{-t} \sum_{n=0}^\infty
\frac{(-1)^n q^{(a+1)n(n-2j-1)/2+n/2}\prod_{k=1}^n (1- e^{-t} q^{k-1})}{\prod_{k=1}^n (1-q^k)} q^{(r+2)(n-j)} z^{j-1-n} \\
& = &  q^{1/2-(a+1)j}   \varphi_{j-1} \\
& - & q^{1/2-(a+2)j}e^{-t} \sum_{n=0}^\infty
\frac{(-1)^n q^{(a+1)n(n-2j+1)/2+n/2}\prod_{k=1}^n (1- e^{-t} q^{k-1})}{\prod_{k=1}^n (1-q^k)} q^{n} z^{j-1-n}.
\een
Then \eqref{eqn:Pphi} is proved by putting all these terms together.
The commutation relation can be obtained
by a computation similar to the proof of \eqref{eqn:[P,Q]2}.
\end{proof}

\subsection{The second case}

Similarly, the  element in $\Gr^{(0)}$  corresponding to $\hat{\tilde{Z}}^{(a)}$
is spanned by the following series
\be
\begin{split}
\tilde{\varphi}_j = &
\exp \biggl(\frac{\sqrt{-1} a \lambda}{2}[(z\pd_z+1/2)^2- (j+1/2)^2]\biggr) \\
&
\exp\left( \sum_{n > 0} \frac{(1-e^{-nt})}{[n]} \frac{z^{-n}}{n} \right) z^j.
\end{split}
\ee

\begin{thm}
The series $\tilde{\varphi}_n$  can be explicitly written as:
\be \label{eqn:Con-Basis}
\tilde{\varphi}_j
= \sum_{n=0}^\infty \frac{q^{an(n-2j-1)/2}\prod_{k=1}^n (e^{-t} - q^{k-1})}{\prod_{k=1}^n (1-q^k)}
q^{n/2} z^{j-n}.
\ee
In particular,
\be
\tilde{\varphi}_0 = \sum_{n=0}^\infty \frac{q^{an(n-1)/2}\prod_{k=1}^n (e^{-t} - q^{k-1})}{\prod_{k=1}^n (1-q^k)}
q^{n/2} z^{-n}.
\ee
\end{thm}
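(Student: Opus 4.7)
The plan is to parallel closely the proof of the analogous formula for $\varphi_j$ in the first case, exploiting the fact that the coefficients in the inner exponential now lack the factor $(-1)^{n-1}$ present earlier. Since the outer operator $\exp\bigl(\sqrt{-1}a\lambda[(z\pd_z+1/2)^2 - (j+1/2)^2]/2\bigr)$ is diagonal on monomials $z^k$, I would first expand the action of the inner exponential on $z^j$ as a power series in $z^{-1}$, and then multiply each term by the eigenvalue contributed by the outer operator.

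For the inner expansion, apply the geometric series $1/(q^{n/2}-q^{-n/2}) = -\sum_{m\geq 1} q^{(m-1/2)n}$, interchange the order of summation over $n$ and $m$, and recognize each resulting series in $n$ as $-\log(1-q^{m-1/2}/z)$ or $-\log(1-e^{-t}q^{m-1/2}/z)$. After exponentiating, one obtains
\[
\exp\left(\sum_{n>0}\frac{1-e^{-nt}}{n[n]}z^{-n}\right) = \prod_{m=1}^\infty \frac{1 - q^{m-1/2}/z}{1 - e^{-t}q^{m-1/2}/z}.
\]
This is exactly the counterpart of the corresponding product in the first case, with factors of the form $1 + (\cdot)/z$ replaced by $1 - (\cdot)/z$.

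Next, expand this product in powers of $z^{-1}$ using identity \eqref{eqn:Product}. The substitution $a = e^t/q$ and $x = e^{-t}q^{1/2}/z$ (together with a shift of the product index by one) converts the left-hand side of \eqref{eqn:Product} into our product; using $\prod_{j=1}^n(1-e^t q^{j-1})\cdot e^{-nt} = \prod_{j=1}^n(e^{-t}-q^{j-1})$ and $(e^{-t}q^{1/2}/z)^n \cdot q^n = q^{n/2}z^{-n}$ then yields the expansion
\[
\sum_{n\geq 0}\frac{\prod_{k=1}^n(e^{-t}-q^{k-1})}{\prod_{k=1}^n(1-q^k)}\, q^{n/2}\, z^{-n}.
\]

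Finally, applying the outer exponential to each $z^{j-n}$ multiplies it by $q^{a[(j-n+1/2)^2 - (j+1/2)^2]/2} = q^{an(n-2j-1)/2}$, producing the claimed formula. The main obstacle is identifying the right specialization of \eqref{eqn:Product}: one has to shift the product index and introduce the auxiliary $a = e^t/q$ before the identity applies cleanly. Once this is done, the remaining manipulations are routine reorganizations paralleling the first case, and the formula for $\tilde{\varphi}_0$ follows by setting $j=0$.
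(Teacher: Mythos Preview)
Your proposal is correct and follows essentially the same three-step strategy as the paper: expand the inner exponential as an infinite product, apply the $q$-binomial identity \eqref{eqn:Product} to obtain a power series in $z^{-1}$, and then let the outer operator act diagonally on each $z^{j-n}$. The only substantive variation is the direction of the geometric expansion of $1/[n]$: the paper writes $1/(q^{n/2}-q^{-n/2})=\sum_{m\ge 1}q^{-(m-1/2)n}$, obtaining the product $\prod_{m\ge 1}\frac{1-e^{-t}q^{-(m-1/2)}/z}{1-q^{-(m-1/2)}/z}$, and then applies \eqref{eqn:Product} with $q$ replaced by $q^{-1}$ before simplifying; you instead take $1/[n]=-\sum_{m\ge 1}q^{(m-1/2)n}$, land on the product $\prod_{m\ge 1}\frac{1-q^{m-1/2}/z}{1-e^{-t}q^{m-1/2}/z}$, and match it to \eqref{eqn:Product} via the substitution $a=e^t/q$, $x=e^{-t}q^{1/2}/z$. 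Both routes are equally short and yield the same series.

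One small slip: the claim ``$(e^{-t}q^{1/2}/z)^n\cdot q^n = q^{n/2}z^{-n}$'' is not right. In fact $(e^{-t}q^{1/2}/z)^n = e^{-nt}q^{n/2}z^{-n}$, and it is precisely this factor $e^{-nt}$ (not an extra $q^n$) that gets absorbed into $\prod_{j=1}^n(1-e^tq^{j-1})$ to produce $\prod_{j=1}^n(e^{-t}-q^{j-1})$. Your final expansion is nonetheless correct, so this is only a typo in the bookkeeping.
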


\begin{proof}
Note that
\ben
&&  \exp\left( \sum_{n > 0} \frac{1-e^{-nt}}{n(q^{n/2} -q^{-n/2})} z^{-n} \right) \\
& = & \exp\left( \sum_{n > 0} \frac{1-e^{-nt}}{n}
\sum_{m=1}^\infty q^{-(m-1/2)n} z^{-n} \right) \\
& = & \prod_{m=1}^\infty \frac{1-e^{-t}q^{-(m-1/2)}/z}{1-q^{-(m-1/2)}/z} \\
& = & \sum_{n=0}^\infty  \prod_{j=1}^n \frac{1-e^{-t} q^{-(j-1)}}{1-q^{-j}}
\cdot q^{-n/2} z^{-n} \\
& = & \sum_{n=0}^\infty  \prod_{j=1}^n \frac{e^{-t} - q^{j-1}}{1-q^j}
\cdot q^{n/2} z^{-n}.
\een
Here in the last equality we use \eqref{eqn:Product}.
Therefore,
\ben
\tilde{\varphi}_j & = &
\exp \biggl(\frac{\sqrt{-1}a\lambda}{2}[(z\pd_z+1/2)^2- (j+1/2)^2]\biggr) \\
&& \sum_{n=0}^\infty \prod_{k=1}^n \frac{e^{-t} - q^{k-1}}{1-q^k}
\cdot q^{n/2} z^{j-n} \\
& = & \sum_{n=0}^\infty \frac{q^{an(n-2j-1)/2}\prod_{k=1}^n (e^{-t} -q^{k-1})}{\prod_{k=1}^n (1-q^k)}
q^{n/2} z^{j-n}.
\een
\end{proof}

\begin{thm}
Define two operators $\hat{\tilde{P}}_0, \hat{\tilde{Q}}_0$ as follows:
\bea
&& \hat{\tilde{P}}_0 = 1 - q^{- z\pd_z} + q^{1/2}
z^{-1} q^{-(a +1) z \pd_z} - q^{1/2} e^{-t} z^{-1} q^{- a  z\pd_z}, \\
&& \hat{\tilde{Q}}_0 = z q^{a z \pd_z}.
\eea
Then one has:
\bea
&& \hat{\tilde{Q}}_0 \tilde{\varphi}_j = q^{aj} \tilde{\varphi}_{j+1},  \\
&& \hat{\tilde{P}}_0 \tilde{\varphi}_j = (1-q^{-j}) \tilde{\varphi}_j
- e^{-t} q^{1/2- a j}(1-q^{-j}) \tilde{\varphi}_{j-1},
\eea
for $j =0, 1, 2, \dots$,
where $\varphi_{-1} = 0$.
In particular,
\be
\hat{\tilde{P}}_0 \tilde{\varphi}_0 = 0.
\ee
Furthermore,
\be
[\hat{\tilde{P}}_0, \hat{\tilde{Q}}_0] = (1-q^{-1}) zq^{(a -1)z\pd_z}
- (1-q^{-1})q^{1/2-a} q^{-z\pd_z}.
\ee
\end{thm}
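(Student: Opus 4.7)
The plan is to proceed exactly in parallel with the proof of Theorem \ref{thm:Conifold}: all three identities (the action of $\hat{\tilde{Q}}_0$, the action of $\hat{\tilde{P}}_0$, and the commutator) can be verified by direct computation using the explicit series \eqref{eqn:Con-Basis} for $\tilde{\varphi}_j$ together with the basic commutation relation $[q^{t z\pd_z}, z] = (q^t - 1) z q^{t z\pd_z}$ that was already derived in the earlier proof.

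First I would verify $\hat{\tilde{Q}}_0 \tilde{\varphi}_j = q^{aj} \tilde{\varphi}_{j+1}$. Applying $z q^{a z\pd_z}$ termwise to \eqref{eqn:Con-Basis} yields a shift $z^{j-n}\mapsto q^{a(j-n)} z^{j+1-n}$, and collecting the exponent $an(n-2j-1)/2 + a(j-n) = aj + an(n-2(j+1)-1)/2$ identifies the result as $q^{aj}\tilde{\varphi}_{j+1}$. Next I would treat $\hat{\tilde{P}}_0 \tilde{\varphi}_j$ by splitting into the four pieces corresponding to the four summands of $\hat{\tilde{P}}_0$. The first two pieces combine to give $(1-q^{-j})\tilde{\varphi}_j$ up to a shifted residual: the operator $-q^{-z\pd_z}$ replaces $q^{n/2}$ by $q^{n/2-j}$, and the constant part $-q^{-j}\tilde{\varphi}_j$ cancels, leaving a series whose index can be shifted by $n\mapsto n+1$ so that $\prod_{k=1}^{n-1}(1-q^k)$ appears in the denominator. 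The third piece $q^{1/2}z^{-1}q^{-(a+1)z\pd_z}$ then produces (after its own reindexing) a series proportional to $\tilde{\varphi}_{j-1}$, and the fourth piece $-q^{1/2}e^{-t}z^{-1}q^{-az\pd_z}$ cancels precisely the telescoping tail from the second piece. The net result is $(1-q^{-j})\tilde{\varphi}_j - e^{-t} q^{1/2-aj}(1-q^{-j})\tilde{\varphi}_{j-1}$, and setting $j=0$ gives $\hat{\tilde{P}}_0\tilde{\varphi}_0=0$.

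For the commutator, I would expand $[\hat{\tilde{P}}_0, \hat{\tilde{Q}}_0]$ by linearity into four commutators: $[1,zq^{az\pd_z}]=0$ contributes nothing, $[-q^{-z\pd_z}, zq^{az\pd_z}] = -[q^{-z\pd_z},z]q^{az\pd_z} = -(q^{-1}-1) z q^{(a-1)z\pd_z}$, and the remaining two pieces of the form $[z^{-1}q^{c z\pd_z}, zq^{az\pd_z}]$ I would expand as $z^{-1}[q^{cz\pd_z}, z]q^{az\pd_z} + [z^{-1},z]q^{cz\pd_z}q^{az\pd_z}$ using the Leibniz rule, where the second bracket vanishes and the first gives the relevant $q^{-z\pd_z}$ contribution. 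Summing these produces the asserted $(1-q^{-1})zq^{(a-1)z\pd_z} - (1-q^{-1})q^{1/2-a}q^{-z\pd_z}$.

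The main obstacle, as in the previous theorem, is the bookkeeping in the action of $\hat{\tilde{P}}_0$: one has to reconcile the shift $n\mapsto n+1$ (which changes the numerator product from $\prod_{k=1}^n(e^{-t}-q^{k-1})$ to $\prod_{k=1}^{n+1}(e^{-t}-q^{k-1})$) with the independent shift of the variable $z$ coming from $z^{-1}$, and make sure that the extra factor $(e^{-t}-q^n)$ from the new last term of the numerator combines with the contributions of the third and fourth summands of $\hat{\tilde{P}}_0$ to produce exactly $-e^{-t}q^{1/2-aj}(1-q^{-j})\tilde{\varphi}_{j-1}$ with no leftover terms. Once this telescoping is carried out carefully, both the action formulas and the commutator follow mechanically.
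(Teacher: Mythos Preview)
Your overall strategy is exactly what the paper does: it simply says the proof is similar to that of Theorem~\ref{thm:Conifold}, and your outline of the actions of $\hat{\tilde{Q}}_0$ and $\hat{\tilde{P}}_0$ on $\tilde{\varphi}_j$ via the explicit series \eqref{eqn:Con-Basis} is correct in spirit.

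There is, however, a concrete error in your commutator computation.  You write that
\[
[z^{-1}q^{c z\pd_z},\, z q^{a z\pd_z}]
= z^{-1}[q^{c z\pd_z}, z]\,q^{a z\pd_z} + [z^{-1}, z]\,q^{c z\pd_z} q^{a z\pd_z},
\]
and then discard the second bracket.  But this is only the Leibniz expansion of $[z^{-1}q^{c z\pd_z}, z]\,q^{a z\pd_z}$, not of the full commutator: you have dropped the term $z\,[z^{-1}q^{c z\pd_z},\, q^{a z\pd_z}] = z\,[z^{-1}, q^{a z\pd_z}]\,q^{c z\pd_z}$, which is \emph{not} zero (since $[z^{-1}, q^{a z\pd_z}] = (1-q^{-a})\,z^{-1} q^{a z\pd_z}$).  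If you keep only your surviving term you obtain $(q^c-1)\,q^{(a+c) z\pd_z}$, whereas the correct answer is $(q^c - q^{-a})\,q^{(a+c) z\pd_z}$.  In particular, for the piece with $c=-a$ your truncated expression gives a nonzero constant instead of $0$, and for $c=-(a+1)$ it gives $q^{1/2}(q^{-(a+1)}-1)\,q^{-z\pd_z}$ instead of the required $-(1-q^{-1})\,q^{1/2-a}\,q^{-z\pd_z}$.  The paper's proof of \eqref{eqn:[P,Q]} (to which the present theorem is referred) keeps precisely this extra Leibniz term; once you restore it, your computation goes through and yields the stated commutator.
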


The proof is similar to that of Theorem \ref{thm:Conifold}.

\subsection{Quantum torus}

Note both the Lie algebra generated by $\hat{P}_0$ and $\hat{Q}_0$
(in this Section and in last Section)
and the Lie algebra generated by $\hat{\tilde{P}}_0$ and $\hat{\tilde{Q}}_0$
are Lie subalgebras of the Lie algebra associated with the noncommutative associative algebra
spanned by $\{z^m q^{n z\pd_z}\}_{m,n \in \bZ}$.
We have
\be
z^{m_1} q^{n_1 z\pd_z} \cdot z^{m_2} q^{n_2z\pd_z}
= q^{n_1m_2} \cdot z^{m_1+m_2} q^{(n_1+n_2) z\pd_z}.
\ee
This algebra is the algebra of two-dimensional quantum complex torus.
This fact indicates that the relevant quantum spectral curves
live in the two-dimensional quantum complex torus,
compatible with the fact that the corresponding local mirror curves 
live in the two-torus.

\vspace{.1in}

{\em Acknoledgements}.
This research is partially supported by NSFC grant 11171174.

\end{document}